\documentclass[]{elsarticle}
\hoffset -10.6mm
\voffset -2cm
\textwidth 15.5cm
\textheight 24cm
\usepackage[sc,noBBpl]{mathpazo}
\usepackage[mathscr]{eucal}
\usepackage{amsmath,amsfonts,amssymb,amsthm}
\usepackage{graphicx}
\usepackage{mathtools}
\usepackage{color}
\usepackage{tgpagella}
\usepackage{subfigure}
\usepackage{caption}
\usepackage{url}

\theoremstyle{plain}
\newtheorem{theorem}{Theorem}
\newtheorem{lemma}{Lemma}

\newtheoremstyle{note}{\topsep}{\topsep}{\slshape}{}{\scshape}{}{ }{}
\theoremstyle{note}

\numberwithin{equation}{section}
\numberwithin{theorem}{section}
\numberwithin{lemma}{section}
\numberwithin{proposition}{section}
\numberwithin{corollary}{section}
\numberwithin{remark}{section}

\mathtoolsset{mathic,centercolon}
%
%




%

%
%

%

%
%
%

%

%
%
%

%
%

\newcommand\scN{{\mathscr N}}

%

\newcommand\mvector{\boldsymbol}

\newcommand\vq{\mvector{q}}

\newcommand\vJ{\mvector{J}}

\newcommand\vP{\mvector{P}}

\newcommand\vvarphi{\mvector{\varphi}}

%
%
%
%
\newcommand\field{\mathbb}

\newcommand\R{\field{R}}

\newcommand\C{\field{C}}

\newcommand\N{\field{N}}

%
%




\newcommand\rmi{\mathrm{i}\mspace{1mu}}

\newcommand\Dt{\frac{\mathrm{d}\phantom{t} }{\mathrm{d}\mspace{1mu}
t}}

\newcommand\Dz{\frac{\mathrm{d}\phantom{z} }{ \mathrm{d}z}}

\newcommand\Dtt{\frac{\mathrm{d}^2\phantom{t} }{\mathrm{d}t^2}}
\newcommand\Dzz{\frac{\mathrm{d}^2\phantom{z} }{\mathrm{d}z^2}}

%
%

%
%

%
%

%
%

%

\begin{document}
\begin{frontmatter}
\title{Non-integrability of the semiclassical Jaynes--Cummings models
without the rotating-wave approximation}

\author{Andrzej Maciejewski}
\ead{ a.maciejewski@ia.uz.zgora.pl}
\address{Janusz Gil Institute of Astronomy,
  University of Zielona G\'ora,
  Lubuska 2, PL-65-265 Zielona G\'ora, Poland}

\author{Wojciech Szumi\'nski}
\ead{w.szuminski@if.uz.zgora.pl}
\address{Institute of Physics, University of Zielona G\'ora, Licealna 9, PL-65-407, Zielona G\'ora, Poland}

\begin{abstract}
   Two versions  of the semi-classical Jaynes--Cummings model without the rotating wave approximation  are investigated. It is shown that for a  non-zero value of the coupling constant  the version introduced by Belobrov, Zaslavsky, and Tartakovsky is Hamiltonian with respect to a certain degenerated Poisson  bracket. Moreover, it is shown that both models are not integrable.   
\end{abstract}

\begin{keyword}
Semi-classical Jaynes--Cummings models; Poincar\'e cross sections; Non-integrability; Variational equations; Differential Galois group.
\end{keyword}

\end{frontmatter}
\section{Introduction}
The Jaynes--Cummings model describes a system of $n$ two-levels atoms
interacting with a single mode of the electromagnetic field. Its
rotating-wave approximation is exactly solvable. With this
approximation, its semi-classical version is also solvable.  On the
other hand, numerical investigations of the semi-classical version of
the Jaynes--Cummings model without the rotating wave approximation
shows its chaotic behaviour for a large range of its parameters.
Nevertheless, particular periodic solutions of this system were found
analytically.

Till now an integrability analysis of this system was not
performed. Here we investigate this problem in the framework of
differential Galois theory.

There are two version of the the semi-classical Jaynes--Cummings
system. The first of them, investigated in \cite{Belobrov:76} has the
following form
\begin{subequations}
  \label{eq:system 1}
  \begin{align}
    \dot x &=-y  \\
    \dot y &=x +z E ,\\
    \dot z &=-y E , \\
    \dot E &=B ,\\
    \dot B &=\alpha x -\mu^2E ,
  \end{align}
\end{subequations}
where $\mu=\omega/\omega_0$ and $ \alpha=n\mu(2\lambda/\omega_0)^2$
are the dimensionless parameters, and the dot denotes the derivative
with respect to rescaled time $\tau=\omega_0 t$. Here $\omega_0$ is
the transition frequency of each atom and $n$ is the number of two
level atoms. It is easy to verify that the system~\eqref{eq:system 1}
posses two first integrals
\begin{equation}
  \label{eq:bloch}
  F= x^2+y^2+z^2,
\end{equation}
and the energy integral
\begin{equation}
  \label{eq:W}
  H=\alpha z-\alpha xE+\frac{1}{2}\mu^2 E^2+\frac{1}{2}B^2.
\end{equation}
Jele\'nska-Kukli\'nska in~ \cite{Jelenska:90::}, and Kujawski
in~\cite{Kujawski:88::,Kujawski:89}, found exact solutions of the
system~\eqref{eq:system 1}. They can be expressed in terms of elliptic
Jacobian function
\begin{equation}
  \label{eq:sol}
  E=E_0\operatorname{cn}(\Omega t,m),
\end{equation}
where \begin{equation}
  \label{eq:omega}
  \Omega^2=\pm\frac{1}{2}\left(\mu^2-\frac{1}{3}\right)\sqrt{1+\chi},
\end{equation}
\begin{equation}
  m=\frac{1}{2}+\frac{\mu^2-\frac{1}{3}}{4\Omega^2},\quad E_0^2=16m\Omega^2,
\end{equation}
and \begin{equation}
  \label{eq:chi} \chi=\left(\mu^2-\frac{1}{3}\right)^{-2}\left\{\pm\frac{4}{3}\sqrt{\alpha^2-4\left(\mu^2-\frac{1}{9}\right)^3}-\left(\mu^2-\frac{1}{9}\right)\left(\mu^2-\frac{17}{9}\right)\right\}.
\end{equation}
Solution~\eqref{eq:sol} lies on energy level $H((x,y,z,E,B)=h$ with
\begin{equation}
  \label{eq:hh} h=\pm\frac{5}{3}\sqrt{\alpha^2-4\left(\mu^2-\frac{1}{9}\right)^3}-2\left(\mu^2-\frac{1}{9}\right)\left(\mu^2-\frac{5}{9}\right).
\end{equation}
The components of the Bloch vector given in terms of the electric
field $E$ are defined by
\begin{equation}
  \label{eq:xyz}
  \begin{split} x(E)&=\frac{1}{\alpha}\left[\frac{3}{2}\left(\mu^2-\frac{1}{9}\right)E-
      \frac{1}{8}E^3\right],\qquad
    y(E)=-\dot x(E),\\
    z(E)&=
    \frac{1}{\alpha}\left[-\frac{3}{2}\left(\mu^2-
        \frac{1}{9}\right)^2\pm\sqrt{\alpha^2-
        4\left(\mu^2-\frac{1}{9}\right)^3}+
      \frac{3}{4}\left(\mu^2-\frac{1}{9}\right)E^2-\frac{3}{32}E^4\right].
  \end{split}
\end{equation}
The domain of parameters for which solution~\eqref{eq:sol} is real can
be easily determined from the above formulae.

Since the system~\eqref{eq:system 1} posses two first
integrals~\eqref{eq:bloch} and~\eqref{eq:W} we are able to reduce the
dimension of the phase space to three. Hence in order to get quickly
insight into the dynamics of the considered system we make the
Poincar\'e cross sections, which are shown in
Figs.~\ref{fig:kus1}-\ref{fig:kus2}. The black bullets denotes the
elliptic solutions described enough.  Although the Poincar\'e cross
section is a standard tool for study chaotic dynamics, we did not find
its application in previous study of the Jaynes--Cummings system.

 \begin{figure}[h!]
   \centering
   \subfigure[$\mu=\frac{2}{5}, \ \alpha=\frac{44}{1875}, \
   h=\frac{44}{1875}$ ]{
     \includegraphics[width=0.48\textwidth]{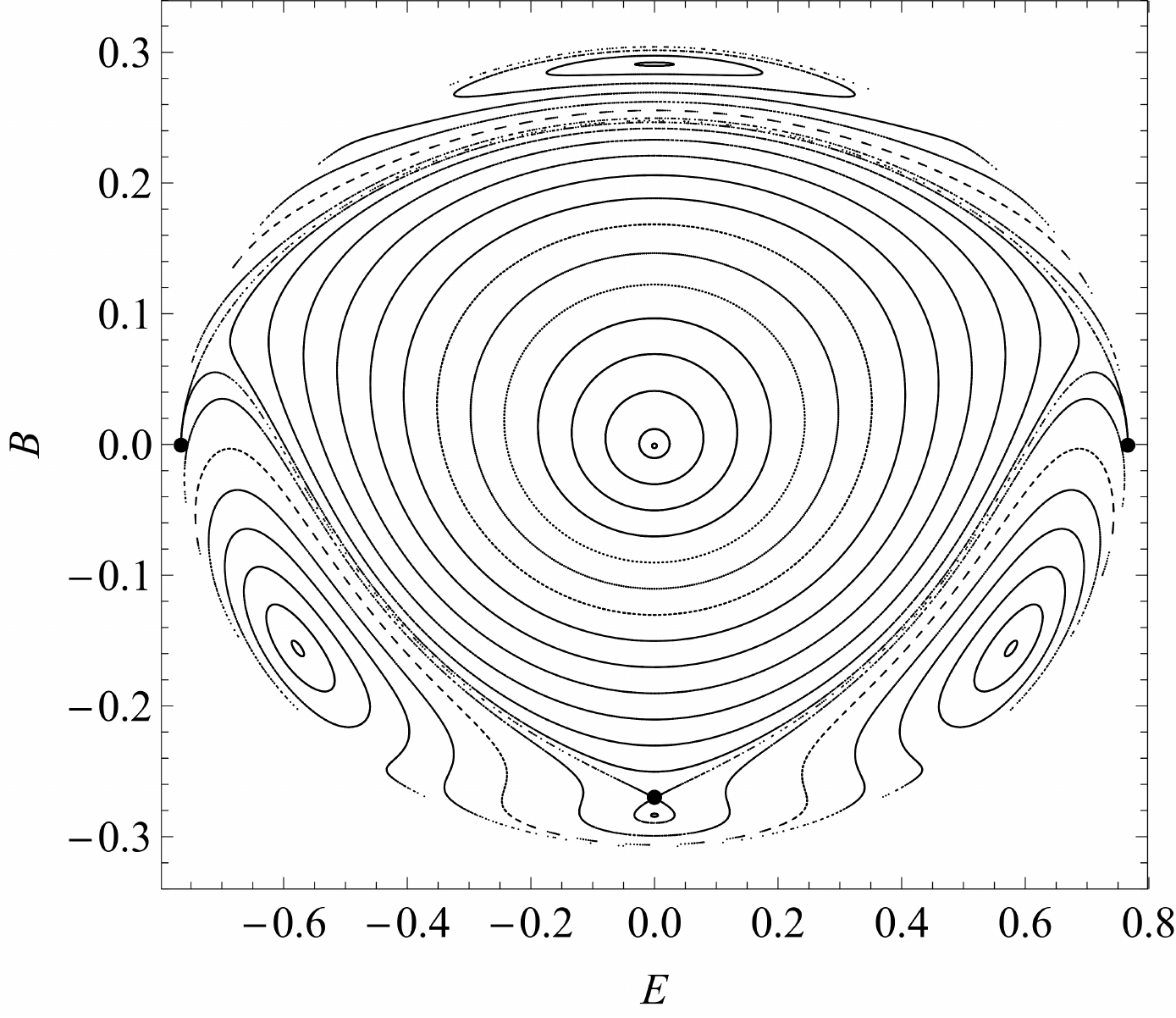}
   }
   \subfigure[$\mu=\frac{2}{3}, \ \alpha=0.43, \
   h=\frac{1}{540}\left(40-\sqrt{29769}\right)$ ]{
     \includegraphics[width=0.48\textwidth]{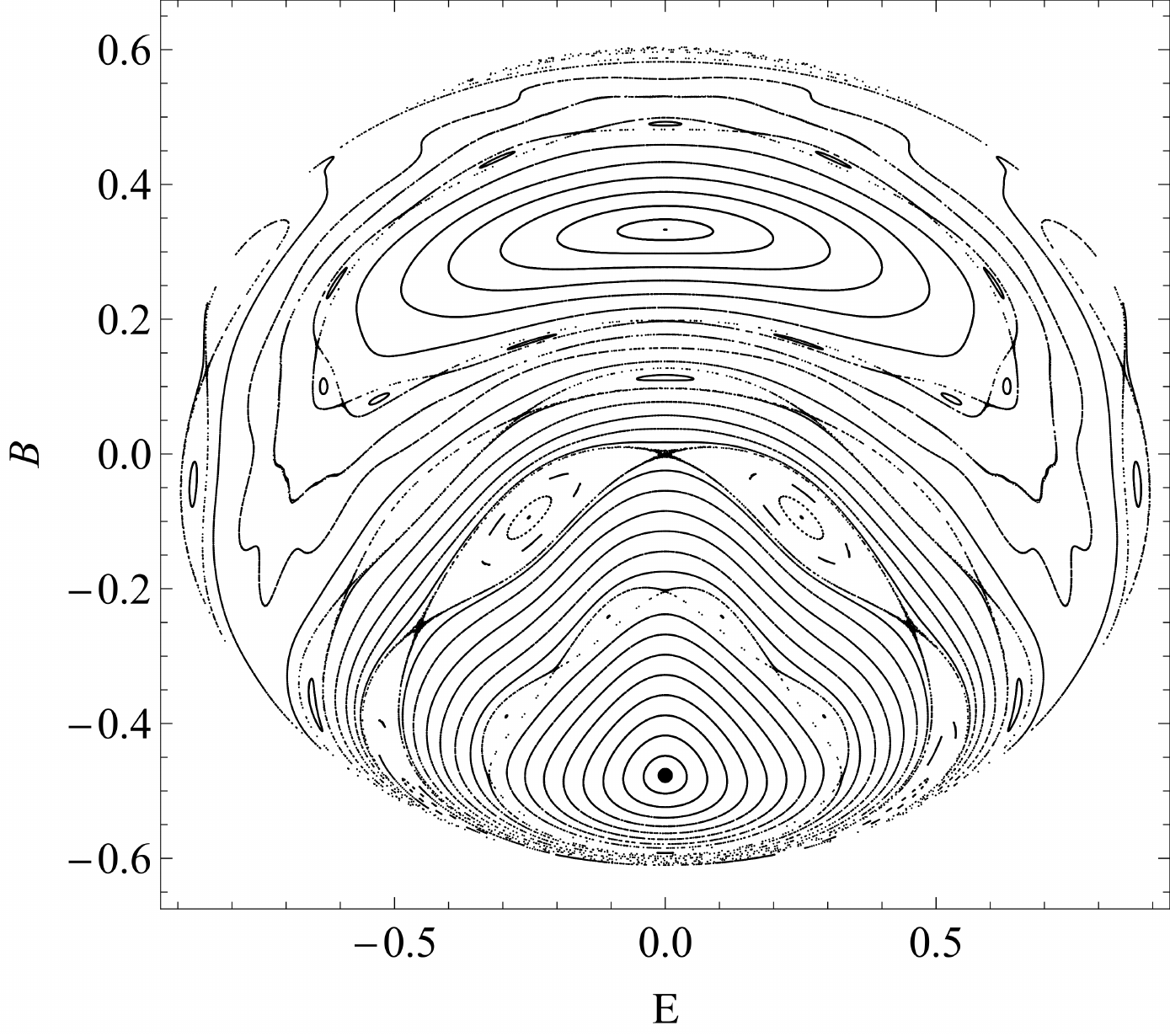}
   }
   \caption{Poincar\'e sections on the surface $x=0$ with $y>0$ and
     $z\in \R$ restricted to the plane~$(E,B)$ \label{fig:kus1}}
 \end{figure}
 \begin{figure}[h!]
   \centering
   \subfigure[$\mu=\frac{1}{2}, \ \alpha=0.42, \
   h=\frac{55}{648}-\frac{\sqrt{64759}}{1620}$ ]{
     \includegraphics[width=0.48\textwidth]{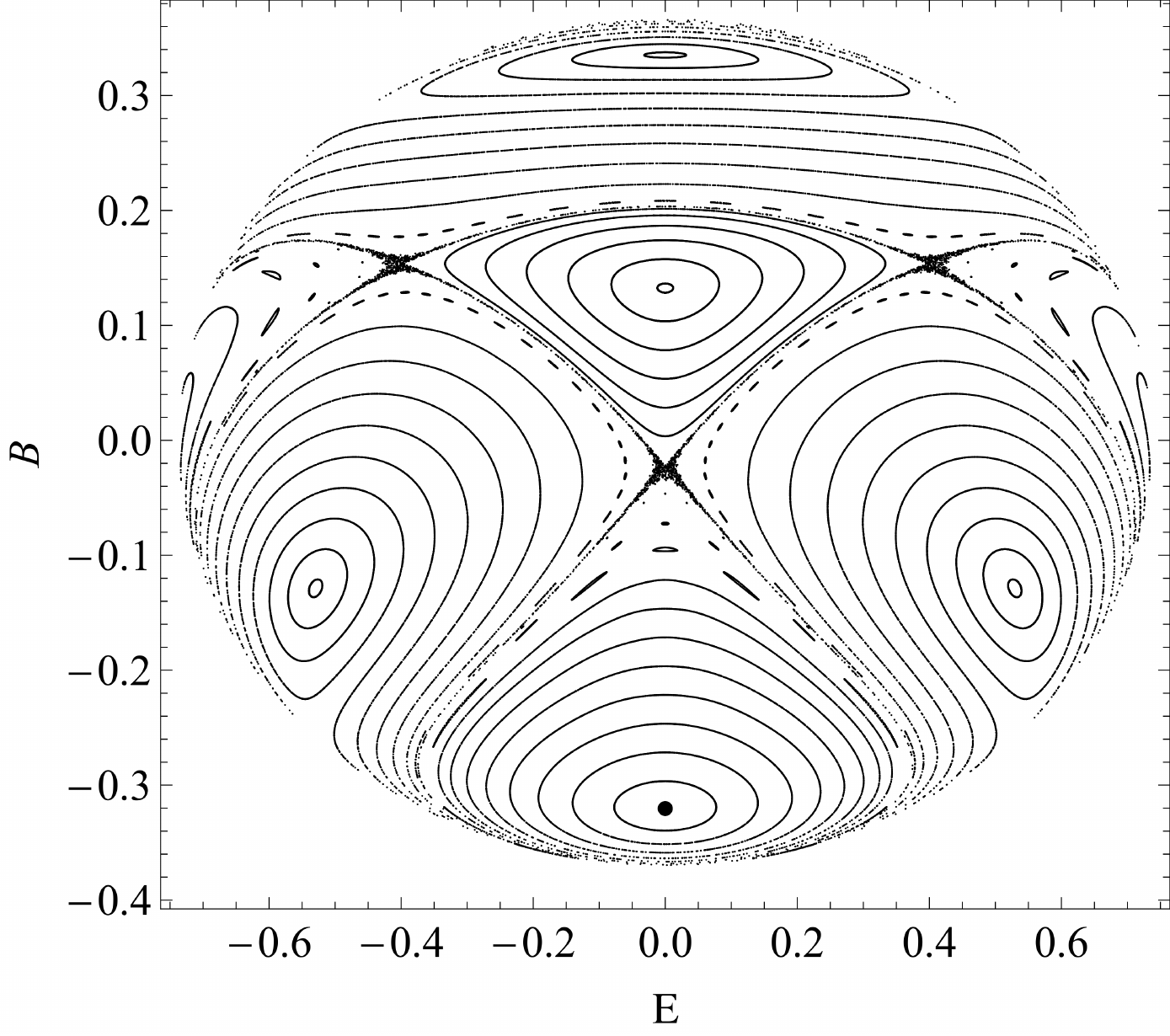}
   }
   \subfigure[$\mu=\frac{\sqrt{3}}{3}, \ \alpha=\frac{1}{4}, \
   h=\frac{8}{81}-\frac{5\sqrt{217}}{324}$ ]{
     \includegraphics[width=0.48\textwidth]{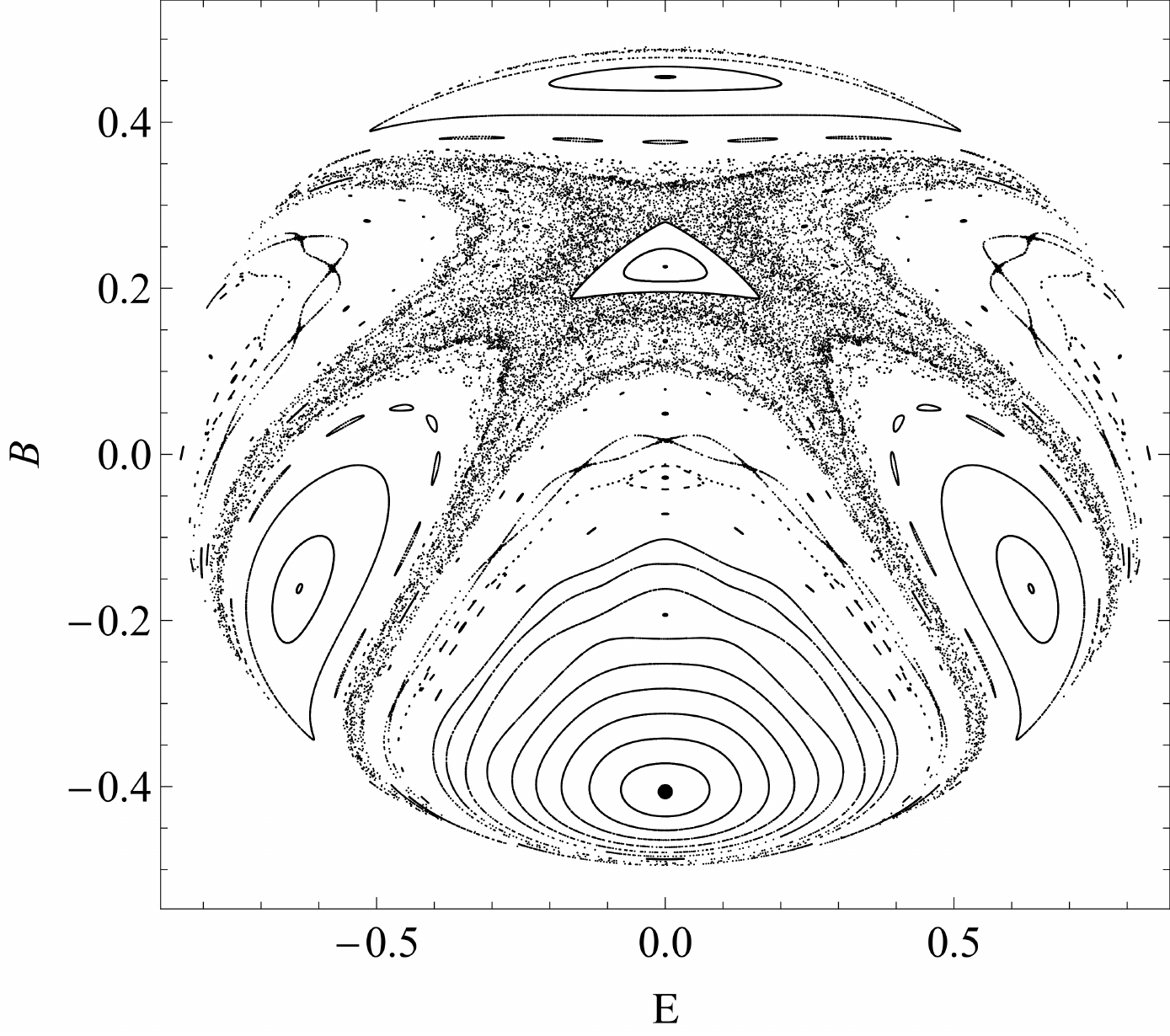}
   }
   \caption{Poincar\'e sections on the surface $x=0$ with $y>0$ and
     $z\in \R$ restricted to the plane~$(E,B)$ \label{fig:kus2}}
 \end{figure}

 Here we would like to point out the system~\eqref{eq:system 1} is
 Hamiltonian with respect to a certain degenerated Poisson structure
 in $\R^5$. To see this let us assume that $\alpha\neq 0$.  Then, we
 can rewrite system~\eqref{eq:system 1} in the following form
 
 \begin{equation}
   \label{eq:1}
   \dot \vq = \vJ(\vq) H'(\vq), \qquad \vq=[ x,y,z, E, B]^T,
 \end{equation}
 where
 \begin{equation}
   \label{eq:2}
   \vJ(\vq):= \frac{1}{\alpha} 
   \begin{bmatrix}
     0  & z & -y & 0 & 0 \\
     -z & 0 & x  & 0 & 0 \\
     y & -x & 0  & 0 & 0 \\
     0 & 0 & 0  & 0 & \alpha \\
     0 & 0 & 0 & -\alpha & 0
   \end{bmatrix}
 \end{equation}
 The Poisson bracket defined by $\vJ(\vq)$ is degenerated. First
 integral $F(\vq)$ is the Casimir of this bracket.

 If $\alpha=0$ then system~\eqref{eq:system 1} in can be written the
 following form
 \begin{equation}
   \label{eq:3}
   \dot \vq = \vJ_0(\vq) K'(\vq), \qquad K(\vq)= \frac{1}{2}F(\vq) + H(\vq)
 \end{equation}
 where
 \begin{equation}
   \label{eq:4}
   \vJ_0(\vq):= 
   \begin{bmatrix}
     0  & -1 & 0 & 0 & 0 \\
     1 & 0 & E  & 0 & 0 \\
     0& -E & 0  & 0 & 0 \\
     0 & 0 & 0  & 0 & 1 \\
     0 & 0 & 0 & -1 & 0
   \end{bmatrix}
 \end{equation}
 Although this matrix is antisymmetric it does not define the Poisson
 structure as the bracket which it defines does not satisfied the
 Jacobi identity.

 The different version of the system~\eqref{eq:system 1} has been
 studied by Miloni and co-workers, see~\cite{Milonni:83::}.  Namely
 \begin{subequations}
   \label{eq:system 2}
   \begin{align}
     \dot x &=-y  \\
     \dot y &=x +z E ,\\
     \dot z &=-y E , \\
     \dot E &=B ,\\
     \dot B &=\alpha(x+z E) -\mu^2E ,
   \end{align}
 \end{subequations}
 where the dimensionless parameters $\mu, \alpha$ have the same
 meaning as in the first case. Authors have presented the complexity
 of the system by means of Fourier analysis as well as by maximal
 Lyapunov exponent.

 We found that this version of the Jaynes--Cummings model has also two
 constants of motion
 \begin{equation}
   F=x^2+y^2+z^2,\qquad
   H_{\mathrm{M}}=\frac{1}{2}\alpha^2y^2-\alpha y B+\alpha\mu^2 z+\frac{1}{2}\mu^2E^2+\frac{1}{2}B^2.
 \end{equation}
 The existence of the first integral $H_{\mathrm{M}}$ was not reported
 in it earlier studies of this system. Although it looks quite similar
 to system \eqref{eq:system 1} we were not able to find any particular
 solution on the sphere $x^2+y^2+z^2=1$.  More importantly, it is
 seems this system cannot be written as a Hamiltonian system with
 respect to a degenerated Poisson structure. Nevertheless, we have
 found that it can be put in the following form
  \begin{equation}
   \label{eq:pm}
   \dot \vq = \vP(\vq) H'_{\mathrm{M}}(\vq), 
 \end{equation}
 where
 \begin{equation}
   \label{eq:44}
   \vP(\vq):= \frac{1}{\alpha\mu^2}
   \begin{bmatrix}
     0  & 0 & -y & 0 & 0 \\
     0 & 0 & x  & \alpha z & 0 \\
     y& -x & 0  & -\alpha y & -\alpha x \\
     0 & -\alpha z & \alpha y & 0 & -\alpha(\alpha z -\mu^2)\\
     0 & 0 & \alpha x  & \alpha(\alpha z -\mu^2) & 0
   \end{bmatrix}
 \end{equation}
 
 Figure~\ref{fig:mil1} and~\ref{fig:mil2} show the complexity of the
 system. We made the Poincar\'e cross sections for the exact
 resonance, when $\mu=1$ and $\alpha=1$ for gradually increasing value
 of energy. As we can see, for higher and higher values of energy the
 dynamics becomes more and more complex, see especially
 Figure~\ref{fig:mil2} presenting the high stage of chaotic behaviour.
 \begin{figure}[h!]
   \centering \subfigure[$\mu=1, \ \alpha=1, \ h=-0.5$ ]{
     \includegraphics[width=0.48\textwidth]{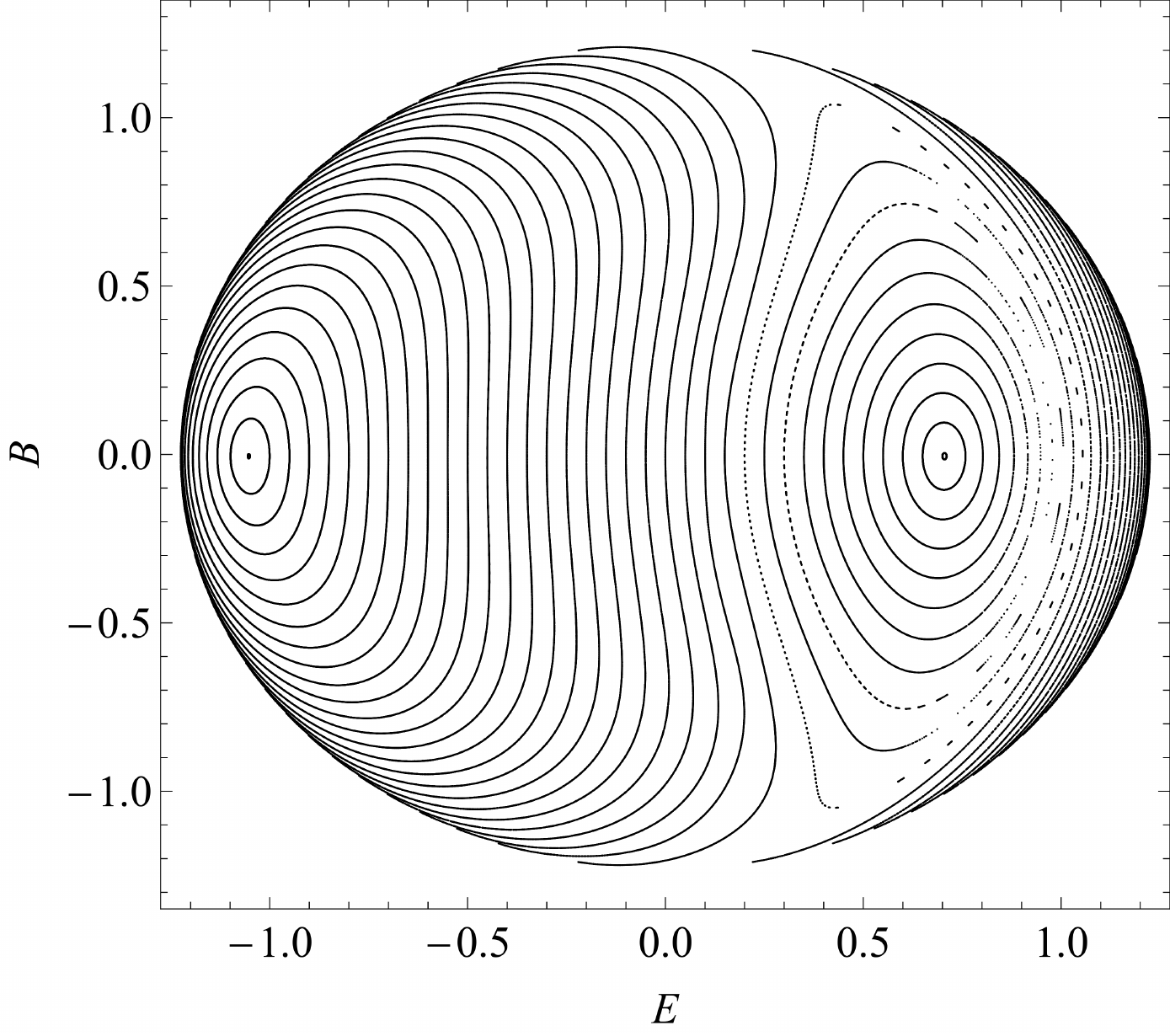}
   } \subfigure[$\mu=1, \ \alpha=1, \ h=1.2$ ]{
     \includegraphics[width=0.48\textwidth]{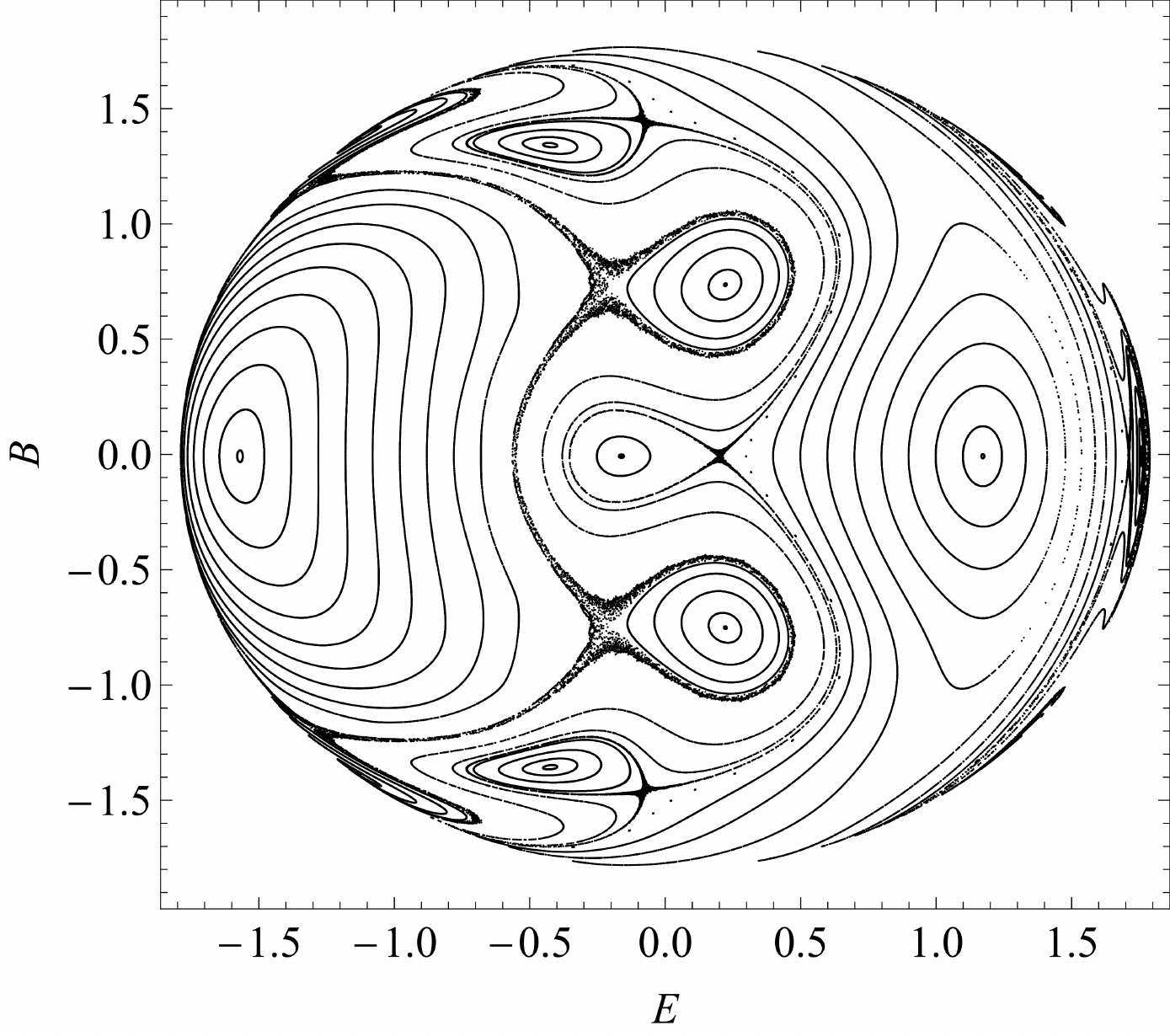}
   }
   \caption{Poincar\'e sections on the surface $x=0$ with $y>0$ and
     $z\in \R$ restricted to the plane~$(E,B)$ \label{fig:mil1}}
 \end{figure}\begin{figure}[h!]
   \centering \subfigure[$\mu=1, \ \alpha=1, \ h=1.3$ ]{
     \includegraphics[width=0.48\textwidth]{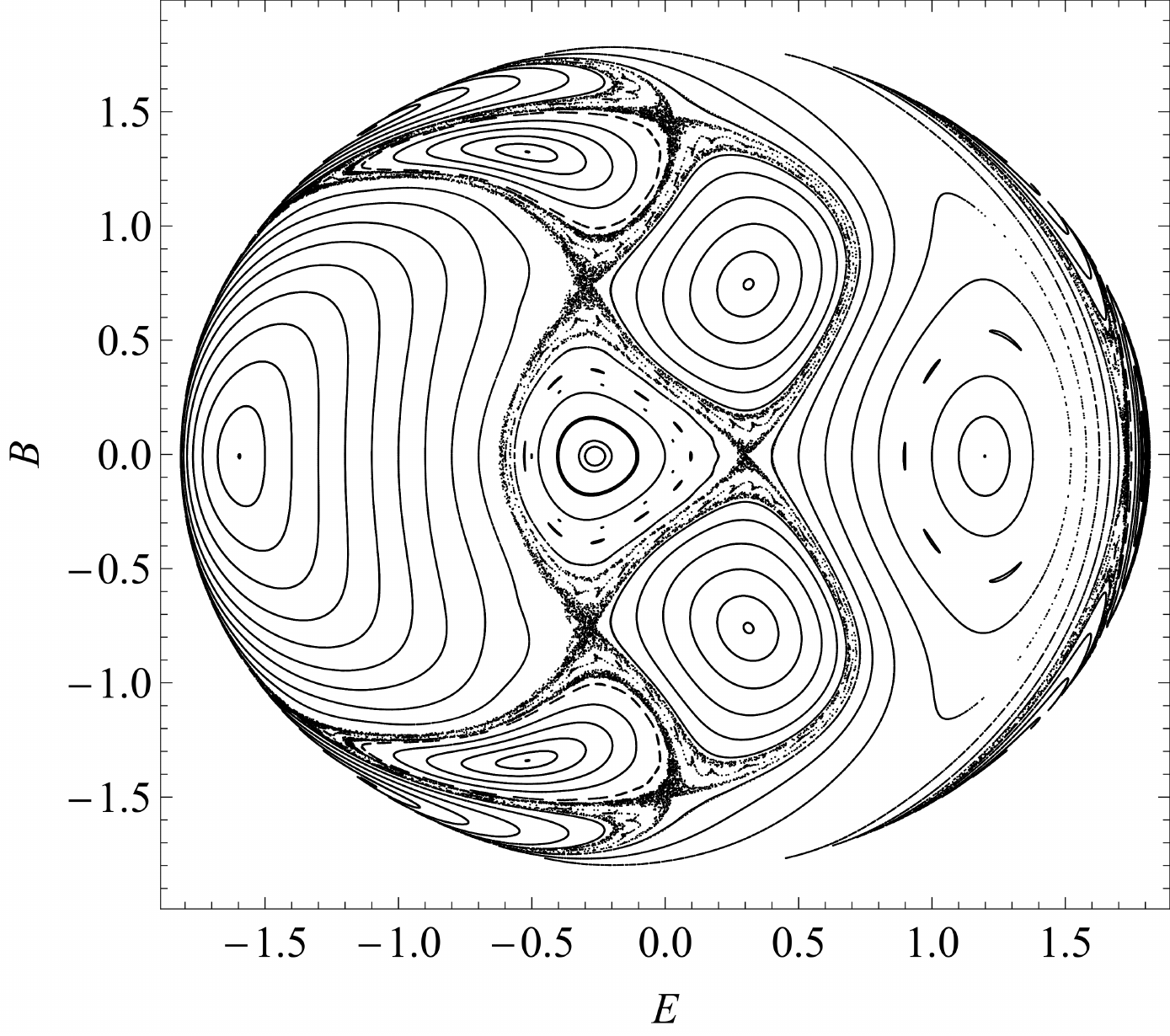}
   } \subfigure[$\mu=1, \ \alpha=1, \ h=2$ ]{
     \includegraphics[width=0.48\textwidth]{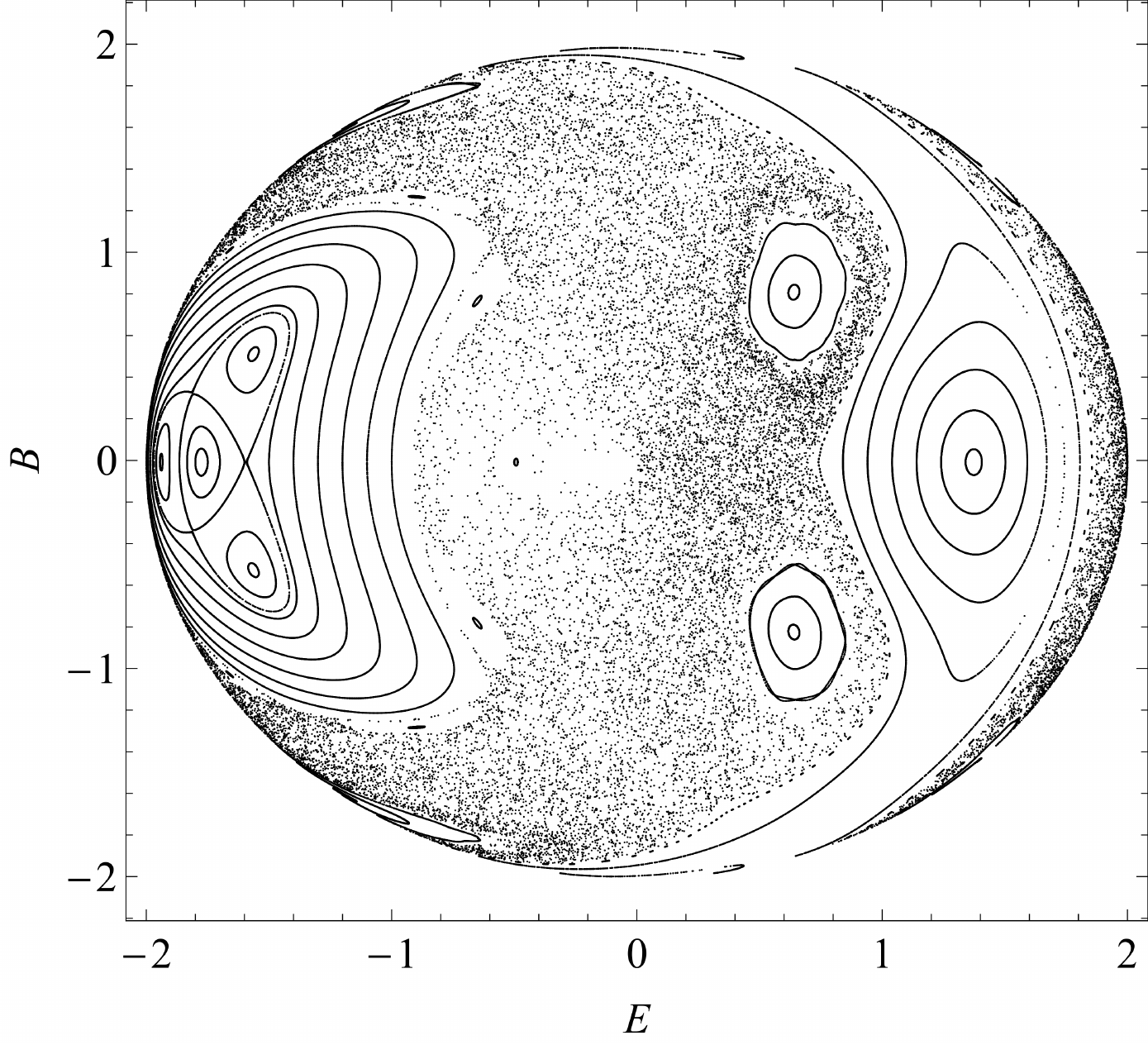}.  }
   \caption{Poincar\'e sections on the surface $x=0$ with $y>0$ and
     $z\in \R$ restricted to the plane~$(E,B)$ \label{fig:mil2}}
 \end{figure}
 As we see, the numerical analysis shows that the dynamics of the both
 systems~\eqref{eq:system 1} and ~\eqref{eq:system 2} is very complex
 and in fact chaotic. Thus, our main goal of this article is to prove
 that these systems are not integrable. More precisely, we show that
 they do not admit additional meromorphic first integral. The main result is formulated in the following theorem.
 \begin{theorem}
   \label{th:1}
   For $\mu\neq 0$, systems~\eqref{eq:system 1} and ~\eqref{eq:system
     2} do not have an additional meromorphic first which is
   functionally independent with the respective known two first
   integrals.
 \end{theorem}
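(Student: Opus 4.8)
The plan is to apply the differential Galois (Morales--Ramis) approach to non-integrability along an explicit non-equilibrium particular solution. For system~\eqref{eq:system 1} the natural candidate is the elliptic solution~\eqref{eq:sol}--\eqref{eq:xyz}. Because \eqref{eq:system 1} is Hamiltonian with Hamiltonian~\eqref{eq:W} on the symplectic leaves $F=\mathrm{const}$ of the Poisson structure~\eqref{eq:2} (with $F$ of~\eqref{eq:bloch} the Casimir), each leaf carries a two--degree--of--freedom Hamiltonian system, and an additional meromorphic first integral independent of $F$ and $H$ would make it Liouville integrable on the leaf. I would therefore invoke Morales--Ramis: if such an integral existed, the identity component $G^0$ of the differential Galois group of the variational equation along the elliptic solution would be abelian. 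For the Milonni system~\eqref{eq:system 2}, which is only of the generalized form~\eqref{eq:pm} and need not be genuinely Poisson, I would instead use the Ayoul--Zung extension of Morales--Ramis to arbitrary vector fields; there the differentials of $F$ and $H_{\mathrm{M}}$ already descend to first integrals of the variational system, and any further independent integral would produce an additional rational first integral of the reduced equation, forcing $G^0$ to be a proper, non--irreducible subgroup. In either case it suffices to show that $G^0$ equals $\SLtwoC$, which is both non--abelian and acts irreducibly, to reach a contradiction.

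The computational core is the same in both cases. First I would linearize the field along the chosen solution $\Gamma$ to obtain the variational equation $\dot\xi = A(t)\,\xi$, and then strip off the trivial directions: the vector field itself is tangent to $\Gamma$ and gives one solution, while the differentials of the known first integrals cut the order down to a single second--order normal variational equation (NVE). The key simplification is algebraization: along~\eqref{eq:sol} the coordinates $x,z$ are polynomials in $E$ by~\eqref{eq:xyz} and, by energy conservation, $B^2$ is a quartic in $E$, so $E$ is a coordinate on an elliptic curve and $\mathrm{d}t=\mathrm{d}E/B$. Changing the independent variable from $t$ to $E$ turns the NVE into a second--order linear ODE with coefficients rational on this curve, which after the standard reduction to normal form I expect to recognize as a Lam\'e-- or hypergeometric--type equation with parameters depending explicitly on $\mu$ (and $\alpha$).

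Finally I would determine $G^0$ for this NVE using Kovacic's algorithm together with the known classification of Galois groups of Lam\'e/hypergeometric equations, showing that for $\mu\neq 0$ the equation admits no Liouvillian solution, so $G^0=\SLtwoC$; a few exceptional parameter values where the elliptic curve degenerates (for instance $\mu^2=\tfrac13$ or $\mu^2=\tfrac19$, where $\Omega$ and $m$ in~\eqref{eq:omega} degenerate) would be handled separately by direct inspection of the resulting trigonometric or rational NVE. This contradicts the Morales--Ramis/Ayoul--Zung necessary condition and yields Theorem~\ref{th:1}. I expect the genuine obstacle to be twofold: carrying out the Galois computation \emph{uniformly} in the parameter $\mu$, so as to exclude the infinitely many potentially reducible or imprimitive cases of Kovacic's algorithm, and---more seriously---the fact that system~\eqref{eq:system 2} has \emph{no} explicit particular solution on the Bloch sphere $x^2+y^2+z^2=1$. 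For the latter I would search for a solution on a complexified invariant manifold, for example on the isotropic cone $x^2+y^2+z^2=0$ or on a level $F=\mathrm{const}$ adapted to the matrix~\eqref{eq:44}, so that the same linearize--reduce--algebraize--Kovacic pipeline can still be run; producing such a solution is the crux of the argument for the Milonni model.
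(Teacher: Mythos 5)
Your overall strategy (a differential Galois obstruction along a particular solution, followed by algebraization and Kovacic's algorithm) is the right one, but the proposal has a genuine gap, and you have flagged it yourself: for the Milonni system~\eqref{eq:system 2} you have no particular solution to linearize along, and your fallback---searching the complexified cone $x^2+y^2+z^2=0$---is left entirely open. The paper's key observation, which your proposal misses, is that one should abandon the Bloch sphere altogether: the plane $\scN=\{x=y=z=0\}$ is invariant for \emph{both} systems (it lies on the level $F=0$, not $F=1$), and on it both systems reduce to the same harmonic oscillator $\dot E=B$, $\dot B=-\mu^2E$, with the elementary solution $E=\cos\mu t$, $B=-\mu\sin\mu t$. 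Along this solution the normal variational equations of the two systems \emph{coincide},
\begin{equation*}
\dot\xi_1=-\xi_2,\qquad \dot\xi_2=\xi_1+\xi_3\cos\mu t,\qquad \dot\xi_3=-\xi_2\cos\mu t,
\end{equation*}
so a single computation settles both models. Restriction to the sphere $\xi_1^2+\xi_2^2+\xi_3^2=1$ gives a Riccati equation, hence a second-order linear ODE, which the substitution $z=\exp(2\rmi\mu t)$ turns into an equation with rational coefficients; Kovacic's algorithm then shows, uniformly in $\mu\neq 0$ and with only one candidate to test, that the Galois group is $\SLtwoC$. This one choice of solution dissolves both obstacles you singled out as the crux: uniformity in the parameters and the absence of an explicit solution for~\eqref{eq:system 2}. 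By contrast, your route for system~\eqref{eq:system 1} via the elliptic solution~\eqref{eq:sol}--\eqref{eq:xyz} would lead to a Lam\'e-type normal variational equation whose Galois group must be controlled uniformly in the two parameters $(\mu,\alpha)$---feasible in principle, but far harder, and it would still leave system~\eqref{eq:system 2} unresolved.

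A second, lesser discrepancy concerns how the Galois obstruction is invoked. Since the phase space is five-dimensional, the paper does not argue via Liouville integrability on symplectic leaves, nor via Ayoul--Zung; it observes instead that both vector fields are divergence-free, so by Jacobi's last-multiplier theorem a third independent meromorphic integral would make them integrable by quadratures, and then applies the Galois-theoretic necessary condition for that kind of integrability (abelian identity component of the Galois group of the variational equations, see~\cite{mp:08::a}). Your leaf-wise Morales--Ramis argument could likely be repaired for system~\eqref{eq:system 1}, where the Poisson structure~\eqref{eq:2} is available, but it does not transfer to system~\eqref{eq:system 2}, which the paper suggests is not Poisson; the Jacobi-theorem framing is what lets a single argument cover both systems at once.
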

 \section{Proof of Theorem~\ref{th:1}}
 Both systems~\eqref{eq:system 1} and~\eqref{eq:system 2} have the
 same invariant manifold defined by
 \begin{equation}
   \scN=\left\{(x,y,z,E,B)\in \C^5|x=y=z=0\right\}
 \end{equation}
 and its restriction to $\scN$ is defined by
 \begin{equation}
   \dot E=B,\qquad \dot B=-\mu^2E.
 \end{equation}
 \label{eq:part}
 Hence, solving equation~\eqref{eq:part} we obtain a particular
 solution
 \begin{equation}
   \label{eq:particular}
   \vvarphi(t)=(0,0,0,E(t),B(t)),\quad E(t)=\cos\mu t,\quad B(t)=-\mu\sin\mu t,
 \end{equation}
 Let $(\xi_1,\xi_2,\xi_3,\eta_1,\eta_2)^T$ be the variation of
 $(x,y,z,E,B)^T$, then the variational equations along the particular
 solution~\eqref{eq:particular} for system~\eqref{eq:system 1} takes
 the following matrix form
 \begin{equation}
   \label{eq:variational}
   \begin{pmatrix}
     \dot \xi_1 \\
     \dot \xi_2 \\
     \dot \xi_3\\
     \dot \eta_1\\
     \dot \eta_2\\
   \end{pmatrix}=
   \begin{pmatrix}
     0 & -1 & 0 & 0 & 0 \\
     1 & 0 & \cos\mu t & 0 & 0 \\
     0 & -\cos\mu t & 0 & 0 & 0 \\
     0 & 0 & 0 & 0 & 1 \\
     \alpha  & 0 & 0 & -\mu ^2 & 0 \\
   \end{pmatrix}\begin{pmatrix}
     \xi_1\\
     \xi_2\\
     \xi_3\\
     \eta_1\\
     \eta_2
   \end{pmatrix}.
 \end{equation}
 Since the motion takes place on the $(E,B)$ plane, equations for
 $(\xi_1,\xi_2,\xi_2)$ form a subsystem of the normal variational
 equations
 \begin{equation}
   \label{eq:normal}
   \dot \xi_1=-\xi_2,\quad \dot \xi_2=\xi_1+\xi_3\cos\mu t,\quad \dot \xi_3=-\xi_2\cos\mu t.
 \end{equation}
 The normal variation equations for \eqref{eq:system 2} are exactly
 the same. It is easy to check that equations~\eqref{eq:normal} have
 first integral $f=\xi_1^2+\xi_2^2+\xi_3^2=1$.  We restrict them to
 the unit sphere $\xi_1^2+\xi_2^2+\xi_3^2=1$. To this end we introduce
 new coordinates
 \begin{equation}
   u_1=\frac{\xi_3+1}{\xi_1-\rmi \xi_2},\quad u_2=-\frac{\xi_1+\rmi\xi_2}{\xi_3+1},
 \end{equation}
 so
 \begin{equation}
   \xi_1=\frac{1-u_1u_2}{u_1-u_2},\quad \xi_2=\rmi\frac{1+u_1u_2}{u_1-u_2},\quad \xi_3=\frac{u_1+u_2}{u_1-u_2}.
 \end{equation}
 Functions $u_1$ and $u_2$ satisfy the Riccati equation
 \begin{equation}
   \label{eq:Riccati}
   \dot u=A+Bu+Cu^2,
 \end{equation}
 with the coefficients
 \begin{equation}
   A=-\frac{1}{2}\rmi\cos \mu t,\quad B=\rmi,\quad C=\frac{1}{2}\rmi\cos \mu t.
 \end{equation}
 Putting $u=-\dot v/(vC)$, we transform~\eqref{eq:Riccati} to the
 linear homogeneous second-order differential equation
 \begin{equation}
   \label{eq:2nd order diff}
   \ddot v+P\dot v+Qv=0,
 \end{equation}
 with coefficients
 \begin{equation}
   P=-B-\frac{\dot C}{C}=-\rmi+\mu \tan\mu t,\quad Q=AC=\frac{1}{4}\cos^2\mu t.
 \end{equation}
 Next, by means of the change of independent variable
 \begin{equation}
   t\longrightarrow z=\exp(2\rmi\mu t),
 \end{equation}
 as well as transformation of derivatives
 \begin{equation}
   \Dt=\dot z\Dz,\qquad \Dtt= (\dot z)^2\Dzz+\ddot z\Dz,
 \end{equation}
 we transform~\eqref{eq:2nd order diff} into the equation
 \begin{equation}
   \label{eq:rational}
   v''+p(z)v'+q(z)v=0,\quad '\equiv \Dz
 \end{equation}
 with the rational coefficients
 \begin{equation}
   p(z)=-\frac{1}{1+z}+\frac{3\mu-1}{2\mu z},\quad q(z)=-\frac{(1+z)^2}{64\mu^2z^3}.
 \end{equation}
 Finally, making the classical change of dependent variable
 \begin{equation}
   \label{eq:change of dependent}
   v=w\exp\left(-\frac{1}{2}\int_{z_0}^zp(s)ds\right),
 \end{equation}
 we transform~\eqref{eq:rational} into its reduced form
 \begin{equation}
   \label{eq:reduced}
   w''=r(z)w,\qquad r(z)=-q(z)+\frac{1}{2}p'(z)+\frac{1}{4}p(z)^2,
 \end{equation}
 where the explicit form of $r(z)$ is given by
 \begin{equation}
   \label{eq:rr}
   r(z)=\frac{3}{4 (z+1)^2}+\frac{1}{64 \mu ^2 z^3}+\frac{3 \mu
     -1}{4 \mu  (z+1)}+\frac{3-4\mu-6\mu^2}{32 \mu ^2
     z^2}+\frac{1+16\mu-48\mu^2}{64 \mu ^2 z}.
 \end{equation}
 Equation~\eqref{eq:reduced} has two singular point $ z_1=0,\ z_2=1, $
 and $z_3 =\infty$. At $z=z_1$ function $r(z)$ has pole of order 3 so
 it is irregular singular point. Point $z=z_2$ is regular singularity
 of the equation and the degree of infinity is equal one. This implies
 that the differential Galois group of the equation~\eqref{eq:reduced}
 can be either dihedral or $\operatorname{SL}(2,\C)$. Hence, in order
 to check the first possibility we are going to apply the second case
 of the Kovacic algorithm.  For the detailed explanation of the
 algorithm please consult, eg.~\cite{Kovacic:86::}.
 \begin{lemma}
   The differential Galois group of the reduced
   equation~\eqref{eq:reduced} is $\operatorname{SL}(2,\C)$.
 \end{lemma}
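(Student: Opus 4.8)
The plan is to exclude the dihedral case; once this is done, the dichotomy already recorded above (the odd-order-three pole at $z=0$ makes this point an irregular singularity with ramified formal solutions, which rules out the reducible Case~1 and the finite-group Case~3 of Kovacic's algorithm) forces the group to be $\operatorname{SL}(2,\C)$. So the whole task reduces to showing that Case~2 of the algorithm returns no solution.

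First I would assemble the local data at the three singular points $z=0$ (pole of order $3$), $z=-1$ (pole of order $2$) and $z=\infty$ ($r$ vanishing to order one). At $z=-1$ the coefficient of $(z+1)^{-2}$ in $r$ is $\tfrac34$, so with $b=\tfrac34$ one gets $\sqrt{1+4b}=2$ and the exponent set $E_{-1}=\{-2,2,6\}$; at the odd pole $z=0$ the set is the singleton $E_0=\{3\}$; and the point at infinity contributes a single value as well. Forming $d=\tfrac12\big(e_\infty-e_0-e_{-1}\big)$ and retaining only the triples for which $d$ is a non-negative integer, the parity forced by the odd entry $e_0=3$ against the even set $E_{-1}$ eliminates all but one family, and that family has a very small degree $d\le 1$.

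For this surviving family I would put $\theta=\tfrac12\big(\tfrac{3}{z}-\tfrac{2}{z+1}\big)=\tfrac{3}{2z}-\tfrac{1}{z+1}$ and test whether a monic polynomial $P$ of degree $d$ can satisfy Kovacic's third-order relation
\begin{equation*}
  P'''+3\theta P''+\big(3\theta^{2}+3\theta'-4r\big)P'+\big(\theta''+3\theta\theta'+\theta^{3}-4r\theta-2r'\big)P=0 .
\end{equation*}
Substituting the explicit $\theta$ and $r$ from \eqref{eq:rr} and matching Laurent coefficients at $z=0$ and $z=-1$ turns this into a small, over-determined linear system for the coefficients of $P$, and I expect it to be inconsistent for every $\mu\neq0$. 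This incompatibility is the crux: the choice of $e_0$ and $e_\infty$ is exactly what makes the leading polar parts cancel, so the obstruction only surfaces in a subleading coefficient and the check must be carried past the dominant singular terms. Once the relation is shown to admit no polynomial solution, Case~2 is impossible, and the differential Galois group of~\eqref{eq:reduced} is therefore $\operatorname{SL}(2,\C)$, as claimed.
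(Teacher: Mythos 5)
Your proposal reproduces the paper's proof almost step for step: the odd third-order pole of $r$ at $z=0$ rules out Cases 1 and 3 of Kovacic's algorithm, the auxiliary sets are exactly the paper's $E_{-1}=\{-2,2,6\}$, $E_{0}=\{3\}$, $E_\infty=\{1\}$, the unique admissible triple gives $d=0$, and the candidate function $\theta=\frac{1}{2}\left(\frac{3}{z}-\frac{2}{z+1}\right)$ together with the third-order relation is precisely what the paper tests. Two bookkeeping slips: what eliminates the other two triples is non-negativity of $d$, not parity (all three values of $d$ are integers, namely $0$, $-2$, $-4$), and the surviving degree is exactly $d=0$, so $P=1$ and there is no ``over-determined linear system for the coefficients of $P$'' to solve --- the relation collapses to the single identity $\theta''+3\theta\theta'+\theta^{3}-4r\theta-2r'=0$ with no unknowns at all.

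That identity check is exactly where your proposal stops short of a proof: you say you \emph{expect} it to be inconsistent for every $\mu\neq 0$, but the lemma is equivalent to that inconsistency, so leaving it as an expectation omits the decisive step --- had the identity held, the group would be dihedral and the lemma false. The paper performs the substitution and finds that the left-hand side vanishes if and only if
\begin{equation*}
  z^2-1+\left(2-8z+2z^2\right)\mu=0
\end{equation*}
holds identically in $z$, which is impossible: the coefficients of $z^2$, $z$, and $1$ would force $\mu=-\tfrac12$, $\mu=0$, and $\mu=\tfrac12$ simultaneously. Carrying out this one computation is all that is needed to complete your argument; everything else matches the published proof, so the Galois group is $\operatorname{SL}(2,\C)$ as claimed.
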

 \begin{proof}
   For singular points $z_1$ and $z_2$ we introduce the auxiliary sets
   defined by
   \begin{equation}
     E_1=\{-2,2,6\},\quad E_2=\{3\},\quad E_\infty=\{1\}.
   \end{equation}
   Next, following the algorithm, we for for elements
   $e=(e_1,e_2,e_3)$ of the product $E=E_1\times E_2\times E_\infty$
   for which the condition
   \begin{equation}
     d(e):=\frac{1}{2}\left(e_3-e_1-e_2\right)\in \N_0
   \end{equation}
   is satisfied. It is not difficult to see that there exist only one
   element $e=\{1,-2,3\}\in E$ such that $d(e)=0$. Thus, passing
   through to the third step of the algorithm, we build the rational
   function
   \begin{equation}
     \omega(z)=\frac{1}{2}\left(\frac{e_1}{z-z_1}+\frac{e_2}{z-z_2}\right)=\frac{1}{2}\left(\frac{3}{z}-\frac{2}{z+1}\right)
   \end{equation} 
   that must satisfy the following differential equation
   \begin{equation}
     \label{eq:omega} 
     \omega''+3\omega\omega'+\omega^3-4r\omega-2r'=0,\quad '\equiv\Dz,
   \end{equation}
   where $r$ is given in~\eqref{eq:rr}. Equation above gives the
   equality
   \begin{equation}
     z^2-1+\left(2-8z+2z^2\right)\mu=0,
   \end{equation}
   that cannot be satisfied for arbitrary values of $z$. Thus, the
   differential Galois group of equation~\eqref{eq:reduced} is
   $\operatorname{SL}(2,\C)$ with non-Abelian identity component.
 \end{proof}
 In order to proof our theorem we notice that systems \eqref{eq:system
   1} and \eqref{eq:system 2} are divergence-free. Thus if they admit
 an additional first integral then, by the theorem of Jacobi, they are 
 integrable by the quadratures, see eq.  \cite{Kozlov}.  The necessary
 condition for the integrability in the Jacobi sense is that the
 identity component of the differential Galois group of the
 variational equation is Abelian, see \cite{mp:08::a}. We proved that it is
 $\operatorname{SL}(2,\C)$. Hence the systems are not integrable.
 \section*{Acknowledgement}
The work has been supported by grant No.DEC-2011/02/A/ST1/00208  
of National Science Centre of Poland.

\end{document}